\newtheorem{theorem}{Theorem}
\newtheoremstyle{mystyle}{}{}{\itshape}{}{\bfseries}{.}{ }%
  {\thmname{#1}\thmnumber{\@ifnotempty{#1}{ }\@upn{#2}}%
    \thmnote{ {\bfseries(#3)}}}
\theoremstyle{mystyle}
\newtheorem*{definition}{Definition}
\begin{document}

\title{Counting Independent Terms in Big-Oh Notation}

\author{Fabiano de S. Oliveira$^1$\thanks{Corresponding author (fabianoo@gmail.com).}\\
Valmir C. Barbosa$^2$\\
\\
$^1$Instituto de Matem\'atica e Estat\'\i stica\\
Universidade do Estado do Rio de Janeiro\\
Rua S\~ao Francisco Xavier, 524, sala 6019B\\
20550-900 Rio de Janeiro - RJ, Brazil\\
\\
$^2$Programa de Engenharia de Sistemas e Computa\c c\~ao, COPPE\\
Universidade Federal do Rio de Janeiro\\
Caixa Postal 68511\\
21941-972 Rio de Janeiro - RJ, Brazil}

\date{}

\maketitle

\begin{abstract}
The field of computational complexity is concerned both with the intrinsic hardness of computational problems and with the efficiency of algorithms to solve them. Given such a problem, normally one designs an algorithm to solve it and sets about establishing bounds on its performance as functions of the algorithm's variables, particularly upper bounds expressed via the big-oh notation. But if we were given some inscrutable code and were asked to figure out its big-oh profile from performance data on a given set of inputs, how hard would we have to grapple with the various possibilities before zooming in on a reasonably small set of candidates? Here we show that, even if we restricted our search to upper bounds given by polynomials, the number of possibilities could be arbitrarily large for two or more variables. This is unexpected, given the available body of examples on algorithmic efficiency, and serves to illustrate the many facets of the big-oh notation, as well as its counter-intuitive twists.

\bigskip
\noindent
\textbf{Keywords:} Analysis of algorithms, Asymptotics, Big-oh notation,
Computational complexity.
\end{abstract}

\newpage
\section{Introduction}

Computer algorithms require resources in order to run, most notably, time (the number of steps they must go through before  termination) and space (the number of memory cells where their input and intermediate results are to be stored). In general, any given run of an algorithm may require a different amount of such resources even if the algorithm is fully deterministic, since both time and space usage depend heavily on the input to the algorithm.

This dependence on the input is quantified by means of what here we call an algorithm's \emph{variables}, that is, numbers that explicitly or implicitly are part of any input to the algorithm and can affect its resource requirements for computing on that particular input. For instance, while the number of steps required by some algorithms for sorting an array of integers depends on the size of the array (this being a piece of information that probably is an explicit part of the input but in any case could easily be derived from it), the greatest integer in the array does not affect the number of steps. The latter holds under the assumption that each individual integer can be stored in a single processor register, which seems reasonable given that it is true of any modern processor for integers up to $4$ billion.

The amount of time or space required by an algorithm is expressed as a function of its variables. In the sorting example, both time and space are nondecreasing functions of the single variable representing the number of integers in the input. As it happens, though, in most cases determining an algorithm's exact time or space function is a rather difficult task. To circumvent some of this difficulty while still allowing for meaningful statements about the algorithm's performance to be made, the commonly accepted practice has been to express such functions, through the so-called \emph{big-oh} notation, in asymptotic terms.

\begin{definition}[big oh, one variable]
Let $f(n)$ and $g(n)$ be real functions. We say that $f(n) = O(g(n))$ if there exist positive constants $c$ and $n_0$ such that $f(n) \leq c g(n)$ for all $n \geq n_0$. 
\end{definition} 

The big-oh notation is in fact more than simply a notation, since the elements that go in its definition allow for a cleaner view of an algorithm's inner workings as far as its resource requirements are concerned. Thus, instead of seeking to determine, say, the time function $f(n)$ of an algorithm, what one does is to look for some $g(n)$ that, for sufficiently large $n$, is proportional to an upper bound on $f(n)$. If such an upper bound is ``tight'' (i.e., if it does reflect the time requirement of the worst-case inputs to the algorithm), then several conclusions can be drawn with the help of $g(n)$. For example, if algorithms $A$ and $B$ admit tight upper bounds on their time functions, respectively $O(n^2)$ and $O(n \log n)$, then the worst-case performance of algorithm $B$ is preferable to that of $A$ for sufficiently large $n$.

It often happens for an algorithm's time and space functions to depend on more than one variable, each representing a different aspect of the inputs to the algorithm. This occurs routinely in the case of algorithms on graphs, since very commonly such algorithms' resource requirements are influenced by both the graph's number of vertices and its number of edges, but it may also occur more indirectly. As an example, consider an algorithm $A$ for sorting a set of $n$ integers in $O(n^2)$ time. Suppose further that an input to $A$ comes in the form of two disjoint sets, say $X$ and $Y$, respectively of sizes $x$ and $y$, and that for reasons that have to do with differentiating elements in $X$ from those in $Y$ in some application, the natural way to express the running time of $A$ is by making explicit use of $x$ and $y$, as in $f(x+y)$, rather than coalescing them as a sum into the variable $n$ and using $f(n)$ instead. Proceeding in this way would lead to $f(x+y) = O((x+y)^2) = O(x^2+2xy+y^2)$, but clearly these expressions are clamoring  for the big-oh notation to be extended to the two-variable case.

\begin{definition}[big oh, two variables]
Let $f(n_1,n_2)$ and $g(n_1,n_2)$ be real functions. We say that $f(n_1,n_2) = O(g(n_1,n_2))$ if there exist positive constants $c$, $n_{1,0}$, and $n_{2,0}$ such that $f(n_1,n_2) \leq c g(n_1,n_2)$ for all $n_1,n_2\geq 0$ satisfying $n_1 \geq n_{1,0}$ or $n_2 \geq n_{2,0}$. 
\end{definition}

With the extended definition in hand, we can express the algorithm's time function as $f(x,y)=O(x^2+2xy+y^2)$ and finally see that, in reality, $f(x,y)=O(x^2+y^2)$. This is so because $2xy=O(x^2)$ for all valuations of $x$ and $y$ in which $y\leq x$ and $2xy=O(y^2)$ for those in which $x\leq y$.

This example is interesting also in that it highlights the advantage of being concise when using the big-oh notation: saying that $f(x,y) = O(x^2+y^2)$ is no less informative than saying that $f(x,y) = O(x^2+2xy+y^2)$, but is more useful to a potential user of the algorithm in question. Motivated by this observation, we equate conciseness with irreducibility, the latter defined as follows. First, let a \emph{term} be the product of single-variable functions; e.g., $x^2$, $2xy$, and $y^2$ are all terms. Given the $k$ terms $T_1,\ldots,T_k$, we say that term $T_i$ is \emph{independent} with respect to the sum $S=T_1+\cdots+T_k$ if it is not asymptotically bounded from above in proportion to $S-T_i$, that is, if $T_i = O(S-T_i)$ does not hold. We say that $S$ is \emph{irreducible} if all of $T_1,\ldots,T_k$ are independent. In the example, $x^2+y^2$ is irreducible but $x^2+2xy+y^2$ is not.

In this article we concern ourselves with functions that, like $S$, can be written as a sum of terms. Our goal is to answer a specific question motivated by the following practical application. Suppose we are given the executable code for some program, along with the list of variables affecting its performance, but no further information (no source code, no time or space function, not even their big-oh forms). Suppose further that, before putting such code to use, we are tasked with estimating a bound on its time (or space) function as concisely as possible, via the big-oh notation, over a given range of the variables. If we were allowed to restrict our search to those time functions that can be expressed as a sum of terms, then knowing beforehand how many terms there can be in a concise representation thereof would be of great help. The question that interests us is then, how many terms can an irreducible sum-of-terms function have?

\section{One variable, and beyond}

Answering this question becomes easier if we make further assumptions, now regarding the nature of the single-variable factors that make up a term. If the sum-of-terms function in question is itself a function of a single variable, then the further assumption is quite reasonable and refers to restricting those factors to be one of the functions that commonly appear in algorithmic analysis: polynomial, polylogarithmic, logarithmic, and exponential. This given, the question is answered quite simply in the single-variable case, in which no sum of at least two terms constitutes an irreducible function.

We see this more clearly by following Hardy~\cite{Hardy}, who defined the class $\mathcal{L}$ of \emph{logarithmico-exponential functions} to be the one comprising the following functions: 
\begin{itemize}
\item $f(n) = a$, for any real constant $a$;
\item $f(n) = n$;
\item $f(n)-g(n)$, if $f(n), g(n) \in \mathcal{L}$;
\item $e^{f(n)}$, if $f(n) \in \mathcal{L}$;
\item $\ln f(n)$, if $f(n) \in \mathcal{L}$ and, for some constant $n_0$, $f(n) > 0$ for all $n \geq n_0$.
\end{itemize}
As it turns out, any function arising naturally when analyzing an algorithm belongs to $\mathcal{L}$~\cite{ConcreteMath}. For example, $4n^3 + (\log n)^5 + 2^{n^2}$ can be seen to be in $\mathcal{L}$ by using the defining conditions for $\mathcal{L}$ as follows:
\begin{itemize}
\item $f(n)+g(n) = f(n)-(0-g(n)) \in \mathcal{L}$ if $f(n),g(n) \in \mathcal{L}$;
\item $f(n)g(n) = e^{\ln f(n)+\ln g(n)} \in \mathcal{L}$ if $f(n),g(n) \in \mathcal{L}$; 
\item $f(n)^k = \prod_{i=1}^k f(n) \in \mathcal{L}$ if $f(n) \in \mathcal{L}$;
\item $k f(n) = \sum_{i=1}^k f(n) \in \mathcal{L}$ if $f(n) \in \mathcal{L}$;
\item $2^{f(n)} = e^{\ln 2 f(n)} \in \mathcal{L}$ if $f(n) \in \mathcal{L}$.
\end{itemize}

An important consequence of Hardy's work is that, given any two functions $f(n)$ and $g(n)$ in $\mathcal{L}$, we have $f(n) = O(g(n))$ or $g(n) = O(f(n))$. Therefore, if $f(n)$ is the sum of at least two terms, each one in $\mathcal{L}$, then $f(n)$ is not irreducible.

And how about sum-of-terms functions having more than one variable? As noted above, multiple variables are a common occurrence in graph algorithms, whose time and space functions often depend on both $n$ and $m$, the graph's numbers of vertices and edges, respectively. In fact, some of the best algorithms for numerous graph problems have time functions bounded by sums of two or three terms, often depending on more variables than simply $n$ and $m$, as shown in Table~\ref{GraphAlgs}. What we see in the table are counts of how many algorithms, as reported in a portion of \cite{Schrijver03}, have time-function bounds with a certain number of terms on a certain number of variables. For example, $65$ of the reported bounds on two variables have one single term and $14$ have two, but none has three or more terms.

\begin{table}[t]
\caption{Number of time-function bounds reported in the ``Complexity survey'' subsections of ~\cite{Schrijver03}, considering the books' Parts I and II only.}
\label{GraphAlgs}
\centering
\begin{tabular}{c c c c}
\hline
Number &\multicolumn{3}{c}{Number of terms} \\
\cline{2-4}
of variables & 1 & 2 & 3 \\
\hline
1 & 52 & 0 & 0 \\
2 & 65 & 14 & 0 \\
3 & 48 & 33 & 2 \\
4 & 5 & 14 & 0 \\
5 & 3 & 2 & 0 \\
\hline
Total & 173 & 63 & 2 \\ 
\hline
\end{tabular}
\end{table}

One might then wonder if two is the maximum number of terms whose sum is irreducible in the case of two variables. That, however, is not the case. To see this, consider the function $f(x,y) = x^2+y^2+(xy)^{3/2}$. The first term in this function is independent, since it does not hold that $x^2 = O(y^2+(xy)^{3/2})$, as seen by simply fixing $y=c$ for any positive constant $c$. The case of the second term is entirely analogous. As for the third term, set $x=y$ to conclude that $(xy)^{3/2}=O(x^2+y^2)$ does not hold either. So $f(x,y)$ is irreducible despite having more than two terms.

We may loosen the conjecture a little, and set about testing whether three, not two, is the maximum number of terms in an irreducible sum of terms. But once again, we are in no luck: the four-term sum $f(x,y) = x^{485}y + x^{477}y^{4} + x^{459}y^{8} + x^{243}y^{32}$, for example, is irreducible. This can be seen by setting, for each term in order, $x = y^{0.7}$, $x = y^{0.31}$, $x = y^{0.21}$, and $x = y^{0.05}$. So conjecturing further seems to have become a little too daunting and perhaps we should back off and consider the possibility that a function may in fact comprise an arbitrarily large number of terms and still be irreducible. Next we prove that this is the case when all the single-variable factors that go in a term are rising power laws, even if the exponents in these laws are all positive integers (i.e., the sum-of-terms function is a polynomial).

\section{An arbitrarily large number of terms} 

Let $f(x,y)$ be such that
\begin{equation*}
f(x,y) = \sum_{i=1}^k x^{a_i} y^{b_i}.
\end{equation*}
If $f(x,y)$ is to be an irreducible function, then clearly no two of the $a_i$'s may equal each other, and similarly no two of the $b_i$'s, since in either case one of the two terms involved would not be independent. Thus, it must be possible to arrange the $a_i$'s into an increasing or decreasing sequence, and similarly the $b_i$'s, but once again for the sake of irreducibility one of the sequences must be increasing while the other is decreasing. We assume $0<a_1<\cdots<a_k$ and $b_1>\cdots>b_k>0$.

For each $i$, we concentrate on valuations of $x$ and $y$ such that $x=y^{z_i}$ for some $z_i>0$, so the $i$th term of $f(x,y)$ is independent if and only if $a_iz_i+b_i>a_jz_i+b_j$ for all $j\neq i$. So arguing for the irreducibility of $f(x,y)$ requires that we find $a_1,\ldots,a_k$, $b_1,\ldots,b_k$, and $z_1,\ldots,z_k$ such that
\begin{equation*}
a_i z_i + b_i > a_j z_i + b_j
\end{equation*}
for all $i$ and all $j\neq i$. Our approach will be to determine the $a_i$'s and the $b_i$'s in such a way as to automatically establish an interval within which to choose the value of each $z_i$. For $1\leq i,j\leq k$, the constraints on this choice are
\begin{eqnarray*}
z_i < (b_i - b_j) / (a_j - a_i) \textrm{ for } i < j,\\
z_i > (b_i - b_j) / (a_j - a_i) \textrm{ for } j < i.
\end{eqnarray*}

A convenient, alternative way to view this condition is to define the ratio
\begin{equation*}
r(i,j) = \frac{b_i - b_j}{a_j - a_i},
\end{equation*}
for which it holds that $r(i,j)=r(j,i)$. Using this equivalence whenever $j<i$ allows the condition to be written as
\begin{equation*}
\max_{1 \leq j < i} r(j,i) < z_i < \min_{i < j \leq k} r(i,j) \textrm{ for } 1 < i < k,
\end{equation*}
in addition to $z_1 < \min_{1 < j \leq k} r(1,j)$ and $z_k > \max_{1 \leq j < k} r(j,k)$. So in order for $f(x,y)$ to be irreducible, we must ensure that
\begin{equation*}
\max_{1 \leq j < i} r(j,i) < \min_{i < j \leq k} r(i,j) \textrm{ for } 1 < i < k.
\end{equation*}

\begin{theorem}
\label{TeoParamArb}
Given any positive integer $k$, $f(x,y)$ is irreducible with $a_i = a_1 (2-\alpha^{i-1})$ and $b_i = b_1 \beta^{i-1}$, where $\alpha$ and $\beta$ are constants such that $0 < \alpha < \beta < 1 - \alpha < 1$.  \end{theorem}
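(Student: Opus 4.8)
The plan is to verify directly the condition isolated just before the theorem: for every $i$ with $1<i<k$ the interval $\bigl(\max_{j<i}r(j,i),\ \min_{j>i}r(i,j)\bigr)$ must be nonempty, the endpoint cases $i=1$ and $i=k$ being one-sided and hence automatic. Monotonicity of the exponents comes for free, since $0<\alpha<1$ makes $2-\alpha^{i-1}$ strictly increase, giving $0<a_1<\cdots<a_k$, while $0<\beta<1$ gives $b_1>\cdots>b_k>0$, as the setup requires. Substituting the explicit forms, the differences $a_j-a_i=a_1(\alpha^{i-1}-\alpha^{j-1})$ and $b_i-b_j=b_1(\beta^{i-1}-\beta^{j-1})$ collapse the ratio to
\begin{equation*}
r(i,j)=\frac{b_1}{a_1}\cdot\frac{\beta^{i-1}-\beta^{j-1}}{\alpha^{i-1}-\alpha^{j-1}},
\end{equation*}
and nonemptiness for all $i$ becomes the single family of inequalities $r(j,i)<r(i,\ell)$ over all triples $j<i<\ell$. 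Because each $r(i,j)$ is positive, any such interval, when nonempty, meets the positive reals, so a valid $z_i>0$ exists; thus it suffices to establish this family.

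Writing exponents $J=j-1<I=i-1<L=\ell-1$ and clearing the (positive) denominators, the inequality $r(j,i)<r(i,\ell)$ is equivalent to
\begin{equation*}
\frac{\beta^{J}-\beta^{I}}{\alpha^{J}-\alpha^{I}}<\frac{\beta^{I}-\beta^{L}}{\alpha^{I}-\alpha^{L}}.
\end{equation*}
Introducing $s=\alpha^{t}$ and $\gamma=\ln\beta/\ln\alpha$, so that $\beta^{t}=s^{\gamma}$, these two quotients are exactly the secant slopes of the graph $s\mapsto s^{\gamma}$ over the adjoining $s$-intervals $[\alpha^{I},\alpha^{J}]$ and $[\alpha^{L},\alpha^{I}]$, respectively (note $\alpha^{L}<\alpha^{I}<\alpha^{J}$ since $0<\alpha<1$). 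As $0<\alpha<\beta<1$ forces $\ln\alpha<\ln\beta<0$ and hence $\gamma\in(0,1)$, the graph $s\mapsto s^{\gamma}$ is strictly concave; for a strictly concave function the secant slope over a left interval strictly exceeds that over an adjoining right interval, which is precisely the displayed inequality. This settles the independence of every term at once, so $f(x,y)$ is irreducible.

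The delicate point, and what I expect to be the crux, is that the requirement is not merely the adjacent comparison $r(i-1,i)<r(i,i+1)$ (which reduces to $(\beta/\alpha)^{\,i-1}>(\beta/\alpha)^{\,i-2}$ and needs only $\beta>\alpha$), but the full cross-family over all nonadjacent $j$ and $\ell$: the left maximum and the right minimum must be controlled simultaneously. The concavity reformulation is exactly what handles this uniformly. Should a calculus-free argument be preferred, the same inequality amounts to the positivity, for $J<I<L$, of the area-type quantity
\begin{equation*}
D=\alpha^{J}\bigl(\beta^{I}-\beta^{L}\bigr)+\alpha^{I}\bigl(\beta^{L}-\beta^{J}\bigr)+\alpha^{L}\bigl(\beta^{J}-\beta^{I}\bigr),
\end{equation*}
expressing that the middle point $(\alpha^{I},\beta^{I})$ lies above the chord joining the outer two; here $0<\alpha,\beta<1$ keeps all denominators positive, $\alpha<\beta$ supplies the concavity, and the remaining slack $\beta<1-\alpha$ can be spent bounding $D$ away from $0$ in an elementary estimate.
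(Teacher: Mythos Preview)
Your proof is correct and takes a genuinely different route from the paper's. The paper first establishes monotonicity of $r(i,j)$ in $j$ by differentiating the auxiliary function $h_{i,j}(t)=(1-t^{j-i+1})/(1-t^{j-i})$, then bridges the left maximum and the right minimum through the chain $r(i-1,i)<r(i-1,k)<\lim_{j\to\infty}r(i-1,j)<r(i,i+1)$; the last comparison reduces to $\beta(1-\beta)>\alpha(1-\alpha)$, which is exactly where the hypothesis $\beta<1-\alpha$ is used. Your reformulation recognizes each $r(\cdot,\cdot)$ as $(b_1/a_1)$ times a secant slope of $s\mapsto s^{\gamma}$ with $\gamma=\ln\beta/\ln\alpha\in(0,1)$, so that strict concavity delivers $r(j,i)<r(i,\ell)$ for all $j<i<\ell$ in one stroke, with no separate monotonicity lemma and no limit. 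Besides being shorter, your argument actually proves more: only $0<\alpha<\beta<1$ is needed, so the extra hypothesis $\beta<1-\alpha$ is an artifact of the paper's detour through $\lim_{j\to\infty}r(i-1,j)$ rather than a genuine requirement of the theorem. What the paper's approach buys, on the other hand, is the explicit identification $\max_{j<i}r(j,i)=r(i-1,i)$ and $\min_{j>i}r(i,j)=r(i,i+1)$, which feeds directly into the concrete recipe for choosing the $z_i$'s stated immediately after the proof.
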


\begin{proof}
We first write $r(i,j)$ as
\begin{equation*}
r(i,j) = \frac{b_1}{a_1} \left(\frac{\beta}{\alpha}\right)^{i-1}\frac{1 - \beta^{j-i}}{1 - \alpha^{j-i}}
\end{equation*}
and note that
\begin{equation*}
r(i,j+1) =  r(i,j) \frac{h_{i,j}(\beta)}{h_{i,j}(\alpha)},
\end{equation*}
where
\begin{equation*}
h_{i,j}(t) = \frac{1-t^{j-i+1}}{1-t^{j-i}},
\end{equation*}
with first derivative given by
\begin{equation*}
h'_{i,j}(t) = \frac{t^{j-i-1} ((j-i)(1-t)-t(1-t^{j-i}))}{(1-t^{j-i})^2}.
\end{equation*}

For $i<j<k$, we have $h'_{i,j}(t)>0$ for $t\in(0,1)$, since using
\begin{equation*}
u_{i,j}(t) = (j-i)(1-t)-t(1-t^{j-i}),
\end{equation*}
we have $u'_{i,j}(t) = (j-i+1)(t^{j-i}-1) < 0$ and $u_{i,j}(1) = 0$, hence $u_{i,j}(t)>0$. It follows that $h_{i,j}(\beta)/h_{i,j}(\alpha) > 1$, and therefore, $r(i,j) < r(i,j+1)$.

Moreover, we also have $r(i-1,k) < r(i, i+1)$ for $1 < i < k$, which can be seen by noting that
\begin{equation*}
r(i-1,k) < \lim_{j \rightarrow \infty} r(i-1,j) < r(i,i+1),
\end{equation*}
where
\begin{equation*}
\lim_{j \rightarrow \infty} r(i-1,j) = \frac{b_1}{a_1} \left(\frac{\beta}{\alpha}\right)^{i-2}
\end{equation*}
and
\begin{equation*}
r(i,i+1) = \frac{b_1}{a_1} \left(\frac{\beta}{\alpha}\right)^{i-1} \frac{1-\beta}{1-\alpha},
\end{equation*}
since $\beta(1-\beta)/\alpha(1-\alpha)>1$ for $\alpha<\beta<1-\alpha$.

The desired inequality follows, since $\max_{1 \leq j < i} r(j,i) =  r(i-1,i) < r(i-1,k) < r(i, i+1) = \min_{i < j \leq k} r(i,j)$.
\end{proof}

For the $a_i$'s and $b_i$'s of Theorem~\ref{TeoParamArb}, following the proof reveals a clear recipe to determine the $z_i$'s: choose $z_1<r(1,2)$, $z_k>r(k-1,k)$, and the remaining ones to satisfy
\begin{equation*}
r(i-1,k)<z_i<r(i,i+1) \textrm{ for } 1<i<k.
\end{equation*}

\section{Integral exponents and constrained valuations}

Our argument in the previous section for the irreducibility of the function $f(x,y)=\sum_{i=1}^k x^{a_i}y^{y_i}$ relied on the particular valuation for $x$ and $y$ that sets $x=y^{z_i}$. All we have required of the exponents $a_i$, $b_i$, and $z_i$ is that they be positive, which leaves plenty of room for them to be nonintegers. This is not a problem in itself, and in fact there exist landmark algorithms that run in time bounded by the input size raised to an irrational power.\footnote{One of the well-known examples of this in the single-variable case is Strassen's algorithm for multiplying two $n\times n$ matrices, whose running time is $O(n^{\log_2 7})$~\cite{Strassen69}.} However, when it comes to the analysis of practical computer algorithms, in most cases we expect the exponents $a_i$ and $b_i$ to be positive integers.

Additionally, depending on the domain at hand, it is often the case that the valuation tying the $x$ and $y$ variables together should only employ values for $z_i$ that are bounded from above by a constant. This is the case of the already noted domain of graph algorithms, in which a graph's numbers $n$ of vertices and $m$ of edges are such that $m = O(n^2)$. In this section we show that there continue to exist exponents for which $f(x,y)$ is irreducible even if we constrain $a_i$ and $b_i$ to be positive integers (hence $f(x,y)$ to be a polynomial), and likewise if $z_i$ is constrained to be no greater than a constant.

\begin{theorem}
\label{TeoParamArbInt}
Given any positive integer $k$, $f(x,y)$ is an irreducible polynomial with $a_i = a_1 (2-\alpha^{i-1})$ and $b_i = b_1 \beta^{i-1}$, where $\alpha=p_\alpha/q_\alpha$ and $\beta=p_\beta/q_\beta$ are rational constants such that $0 < \alpha < \beta < 1 - \alpha < 1$, $a_1 = q_\alpha^{k-1}$, and $b_1 = q_\beta^{k-1}$.  
\end{theorem}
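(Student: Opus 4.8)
The plan is to reduce this theorem to Theorem~\ref{TeoParamArb} and then to verify that the specific rational choices force every exponent to be a positive integer. First I would observe that irreducibility is already guaranteed by Theorem~\ref{TeoParamArb}: its proof establishes the required chain of inequalities among the $r(i,j)$ purely from the hypotheses $0 < \alpha < \beta < 1 - \alpha < 1$, and the common factor $b_1/a_1$ appears in every $r(i,j)$, so all the strict inequalities are preserved under any positive choice of $a_1$ and $b_1$. Since $\alpha$ and $\beta$ here are rational numbers still satisfying those same constraints, and $a_1 = q_\alpha^{k-1}$ and $b_1 = q_\beta^{k-1}$ are positive, the function $f(x,y)$ remains irreducible verbatim, with the $z_i$'s chosen exactly as in the recipe following Theorem~\ref{TeoParamArb}.

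The only genuinely new content, then, is to check that $f(x,y)$ is a polynomial, that is, that each $a_i$ and each $b_i$ is a positive integer for $1 \le i \le k$. For the $b_i$ I would substitute $\beta = p_\beta/q_\beta$ and $b_1 = q_\beta^{k-1}$ to get $b_i = q_\beta^{k-1}(p_\beta/q_\beta)^{i-1} = q_\beta^{k-i} p_\beta^{i-1}$; since $1 \le i \le k$ gives $k - i \ge 0$, this is a product of nonnegative integer powers of positive integers and hence a positive integer. For the $a_i$ the analogous substitution gives $a_i = 2 q_\alpha^{k-1} - q_\alpha^{k-i} p_\alpha^{i-1}$, which is again an integer because $k - i \ge 0$; and it is positive because $0 < \alpha < 1$ forces $\alpha^{i-1} \le 1 < 2$, so that $a_i = a_1(2 - \alpha^{i-1}) \ge a_1 > 0$.

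There is essentially no hard obstacle here; the entire work lies in spotting that scaling $a_1$ and $b_1$ by the $(k-1)$-st powers of the respective denominators clears all fractions simultaneously while leaving the irreducibility argument of Theorem~\ref{TeoParamArb} completely untouched. The only points that demand care are, first, confirming that the exponent $k - i$ never goes negative over the range $1 \le i \le k$ — which is precisely what pins down the choice of the $(k-1)$-st power rather than any smaller one — and, second, the positivity of $a_i$, which I obtain for free from $\alpha < 1$. I would also note in passing that rationals satisfying $0 < \alpha < \beta < 1 - \alpha < 1$ certainly exist (for instance $\alpha = 1/4$ and $\beta = 1/3$), so the hypotheses of the theorem are nonvacuous.
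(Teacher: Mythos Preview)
Your proposal is correct and follows essentially the same approach as the paper: invoke the proof of Theorem~\ref{TeoParamArb} verbatim for irreducibility, then verify that the particular choices of $a_1$ and $b_1$ make every $a_i$ and $b_i$ a positive integer. Your write-up is in fact more explicit than the paper's, which simply says to ``observe that all $a_i$'s and $b_i$'s are positive integers'' without carrying out the substitutions you spell out.
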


\begin{proof}
Proceed exactly as in the proof of Theorem~\ref{TeoParamArb}, then observe that all $a_i$'s and $b_i$'s are positive integers.
\end{proof}

A detailed example illustrating Theorem~\ref{TeoParamArbInt} is given in Table~\ref{TransformationTable} and Figure~\ref{IntExpGraph} for $k=6$. Table and figure provide different takes on the exact same setting, the former highlighting the integral nature of the exponents in $f(x,y)$ as well as each $a_jz_i+b_j$ as a maximum over all $j$, the latter highlighting each $z_i$ and $r(i,j)$ for $j>i$.

\begin{table}[t]
\caption{For $f(x,y)$ as in Theorem~\ref{TeoParamArbInt}, with $k=6$, $p_\alpha = 1$, $q_\alpha = 3$, $p_\beta = 1$, and $q_\beta=2$, the cell at position $j,i$ gives the value of $a_j z_i + b_j$. These values are highlighted in a bold typeface whenever $j=i$, indicating the maximum on each column (i.e., for fixed $i$ and all $j$).}
\label{TransformationTable}
\centering
\begin{tabular}{c c c c c c c c c}
\hline
& & & \multicolumn{6}{c}{$i$: $z_i$} \\
\cline{4-9}
$j$ & $a_j$ & $b_j$ & 1: 0.05 & 2: 0.14 & 3: 0.21 & 4: 0.31 & 5: 0.47 & 6: 0.70 \\
\hline
1 & 243 & 32 & \textbf{44.15}   & 66.02 & 83.03 & 107.33 & 146.21 & 202.10 \\ 
2 & 405 & 16 & 36.25 & \textbf{72.70} & 101.05 & 141.55 & 206.35 & 299.50 \\ 
3 & 459 & 8 & 30.95 & 72.26 & \textbf{104.39} & 150.29 & 223.73 & 329.30 \\ 
4 & 477 & 4 & 27.85 & 70.78 & 104.17 & \textbf{151.87} & 228.19 & 337.90 \\ 
5 & 483 & 2 & 26.15 & 69.62 & 103.43 & 151.73 & \textbf{229.01} & 340.10 \\ 
6 & 485 & 1 & 25.25 & 68.90 & 102.85 & 151.35 & 228.95 & \textbf{340.50} \\   
\hline
\end{tabular}
\end{table}

\begin{figure}[t]
\centering
\scalebox{0.95}{\includegraphics{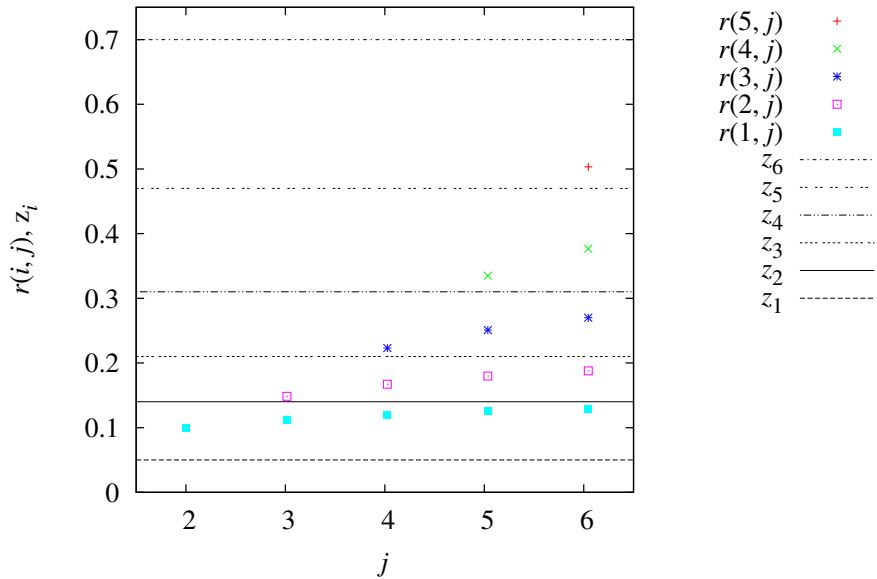}}
\caption{$r(i,j)$ and $z_i$ values for $1 \leq i < j \leq k$, in the same setting as in Table~\ref{TransformationTable}.}
\label{IntExpGraph}
\end{figure}

\begin{theorem}
Given any positive integer $k$ and a constant $c>0$, there exist constants $\alpha$ and $\beta$ such that $0 < \alpha < \beta < 1 - \alpha < 1$ for which $f(x,y)$ is irreducible with $z_i<c$, $a_i = a_1 (2-\alpha^{i-1})$, and $b_i = b_1 \beta^{i-1}$, where $b_1/a_1<c$.    
\end{theorem}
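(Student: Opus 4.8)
The plan is to reuse the exponent construction of Theorem~\ref{TeoParamArb} unchanged and to exploit the fact that the entire family of thresholds $r(i,j)$ scales \emph{linearly} with the ratio $b_1/a_1$, whereas the irreducibility criterion $\max_{1\le j<i}r(j,i)<\min_{i<j\le k}r(i,j)$ is invariant under that scaling. Fixing any $\alpha,\beta$ with $0<\alpha<\beta<1-\alpha<1$ already makes $f(x,y)$ irreducible by Theorem~\ref{TeoParamArb}; what remains is to force every admissible $z_i$ below the prescribed bound $c$, which I would accomplish by shrinking $b_1/a_1$.

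First I would recall from the proof of Theorem~\ref{TeoParamArb} that $\max_{1\le j<i}r(j,i)=r(i-1,i)$ and $\min_{i<j\le k}r(i,j)=r(i,i+1)$, so the recipe for the $z_i$'s is $z_1<r(1,2)$, then $r(i-1,k)<z_i<r(i,i+1)$ for $1<i<k$, and finally $z_k>r(k-1,k)$. Every $z_i$ with $i<k$ is bounded above by $r(i,i+1)$; only $z_k$ lacks a recipe upper bound. Since
\begin{equation*}
r(i,i+1)=\frac{b_1}{a_1}\left(\frac{\beta}{\alpha}\right)^{i-1}\frac{1-\beta}{1-\alpha}
\end{equation*}
is increasing in $i$ (because $\beta/\alpha>1$), the largest of these upper bounds is $r(k-1,k)$.

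The decisive step is a single quantitative choice of scale. Setting
\begin{equation*}
M=\max\left\{1,\ \left(\frac{\beta}{\alpha}\right)^{k-2}\frac{1-\beta}{1-\alpha}\right\}
\end{equation*}
and choosing $a_1,b_1>0$ with $b_1/a_1<c/M$ ensures simultaneously that $b_1/a_1<c$ and that $r(k-1,k)<c$. I would then select the $z_i$'s exactly as in Theorem~\ref{TeoParamArb} for $i<k$, noting $z_i<r(i,i+1)\le r(k-1,k)<c$, and pick $z_k$ in the nonempty interval $(r(k-1,k),c)$, which is legitimate precisely because $z_k$ is otherwise unconstrained from above. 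All $z_i$ then lie below $c$, and $f(x,y)$ is irreducible.

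The one point deserving care — and the only real obstacle — is confirming that shrinking $b_1/a_1$ does not spoil irreducibility. Because every $r(i,j)$ carries the common factor $b_1/a_1$, scaling this ratio multiplies all thresholds by the same positive constant and so leaves the sign of each inequality $\max_{1\le j<i}r(j,i)<\min_{i<j\le k}r(i,j)$ unchanged. Thus the nonemptiness of each $z_i$-interval guaranteed by Theorem~\ref{TeoParamArb} survives the rescaling, and the genuinely new content reduces to bounding the single free variable $z_k$ from above by $c$, which the inequality $r(k-1,k)<c$ makes possible.
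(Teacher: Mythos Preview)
Your argument is correct, but it takes a different route from the paper's. Both proofs hinge on forcing $r(k-1,k)<c$, since $r(k-1,k)=\dfrac{b_1}{a_1}\left(\dfrac{\beta}{\alpha}\right)^{k-2}\dfrac{1-\beta}{1-\alpha}$ dominates all the other thresholds; you achieve this by fixing $\alpha,\beta$ arbitrarily in the admissible range and shrinking the scale factor $b_1/a_1$, whereas the paper holds $b_1/a_1<c$ fixed and instead pushes $\beta/\alpha\to 1^{+}$, showing that the resulting bound $k<2+\log_{\beta/\alpha}\bigl[(a_1/b_1)\,c\,(1-\alpha)/(1-\beta)\bigr]$ can be made to exceed any prescribed $k$. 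Your approach is the more elementary one (a direct scaling bound, no limit argument), and it has the pleasant feature that \emph{every} pair $(\alpha,\beta)$ in the range works once $b_1/a_1$ is adjusted. The paper's approach, by contrast, establishes the slightly sharper statement that $a_1,b_1$ may be prescribed in advance subject only to $b_1/a_1<c$, with $\alpha,\beta$ then chosen accordingly; this reading is closer to how the clause ``where $b_1/a_1<c$'' functions in the theorem, namely as a hypothesis on given data rather than as a further degree of freedom.
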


\begin{proof}
Proceed exactly as in the proof of Theorem~\ref{TeoParamArb}, then impose $r(k-1,k)<c$ to obtain an upper bound on the allowable values of $k$:
\begin{equation*}
k < 2 + \log_{\beta/\alpha} \frac{a_1}{b_1} \left(\frac{1 - \alpha}{1 - \beta}\right) c.
\end{equation*}
The result follows from noting that, for $b_1<ca_1$,
\begin{equation*}
\lim_{\beta/\alpha \rightarrow 1^+} \log_{\beta/\alpha} \frac{a_1}{b_1} \left(\frac{1 - \alpha}{1 - \beta}\right) c  = \infty.
\end{equation*}
Therefore, choosing $\alpha$ and $\beta$ to be arbitrarily close to each other accommodates any desired $k$.
\end{proof}

\section{Concluding remarks}

The analysis of computer algorithms via the big-oh notation is an essential part of most activities within computer science, including both theoretical studies and the myriad of applications to which people working in the field devote themselves. In the great majority of situations the algorithm that is being considered is known at some level of detail, so that obtaining big-oh expressions for how much time or space it consumes, though far from being a simple task, is at least a well-defined one. In this article, by contrast, we started out with a ``black-box'' version of an algorithm, that is, a version that we can only analyze by running it on a given set of inputs to make measurements of how much of the necessary resources the algorithm spends.

Faced with the task of discovering big-oh expressions bounding such resource usage, and limiting our search to polynomial-like functions of the relevant variables, we found that, in principle, an automated procedure to carry out the task might have to consider functions comprising an unbounded number of terms. This is surprising, given all the accumulated knowledge on so many algorithms to solve so many different problems, but we feel that it sheds additional light on the big-oh notation itself, especially when we consider the subtle pitfalls that sometimes motivate a deeper examination of its use~\cite{Regan}.

We close with two final remarks. The first is that our conclusions can be easily extended to the case of more variables, recursively by simply fusing together all current variables through appropriate valuations whenever a new variable is added to the pool. The second remark is that, even though for this work we found motivation in the analysis of computer algorithms, the big-oh notation is in fact of much wider interest and applicability, providing a crucial tool whenever it is ``asymptotics,'' not exact figures, that matter. This occurs in several other fields within mathematics, as well as in science and engineering.

\subsection*{Acknowledgments}

The authors acknowledge partial support from CNPq, CAPES, FAPERJ, and a FAPERJ
BBP grant.

\bibliography{bigoh}
\bibliographystyle{unsrt}

\end{document}